\def\Dj{\hbox{D\kern-.73em\raise.30ex\hbox{-}
\raise-.30ex\hbox{}}}
\def\dj{\hbox{d\kern-.33em\raise.80ex\hbox{-}
\raise-.80ex\hbox{\kern-.40em}}}
\newcommand{\minitab}[2][l]{\begin{tabular}{#1}#2\end{tabular}}
\newtheorem{te}{Theorem}[section]
\newtheorem{co}{Corollary}[section]
\newtheorem{lemma}{Lemma}[section]
\newcommand{\beq}{\begin{eqnarray}}
\newcommand{\eeq}{\end{eqnarray}}
\newcommand{\beqs}{\begin{eqnarray*}}
\newcommand{\eeqs}{\end{eqnarray*}}
\newcommand{\real}{\mathbb R}
\begin{document}
%
%
%
%

\baselineskip=0.30in

\vspace*{20mm}

\begin{center}
{\Large \bf {On extremal trees with respect to the \boldmath $F$-index}\\[2mm]
20 Juli 2015
}
\vspace{10mm}

{\large \bf
Hosam Abdo$^a$\footnote{Corresponding author}, Darko Dimitrov$^b$, Ivan Gutman$^{c,d}$}

\vspace{6mm}

\baselineskip=0.20in

$^a${\it Institut f\"ur Informatik, Freie Universit\"{a}t Berlin,
\\ Takustra{\ss}e 9, D--14195 Berlin, Germany} \\E-mail: {\tt abdo@mi.fu-berlin.de} \\[2mm]
$^b${\it Hochschule f{\" u}r Technik und Wirtschaft Berlin, \\
             Wilhelminenhofstra{\ss}e 75A, D--12459 Berlin, Germany} \\
E-mail: {\tt darko.dimitrov11@gmail.com} \\[2mm]
$^c${\it Faculty of Science, University of Kragujevac, Kragujevac, Serbia} \\
E-mail: {\tt gutman@kg.ac.rs} \\[2mm]
$^d${\it State University of Novi Pazar, Novi Pazar, Serbia}

\vspace{6mm}


\end{center}

\vspace{6mm}

\baselineskip=0.20in

\noindent {\bf Abstract }

\vspace{3mm}

In a study on the structure--dependency of the total $\pi$-electron energy from 1972,
Trinajsti\'c and one of the present authors 
have shown that it depends on the sums $\sum_{v\in V}d(v)^2$ and
$\sum_{v\in V}d(v)^3$, where $d(v)$ is the degree of a vertex $v$ of the \
underling molecular graph $G$. The first sum was later named {\it first Zagreb index}
and over the years became one of the most investigated graph--based molecular
structure descriptors. On the other hand, the second sum, except in very few
works on the general first Zagreb index 
and the zeroth--order general Randi\'c index, 
has been almost completely neglected. Recently, this second sum was named
{\it forgotten index}, or shortly the $F$-{\it index}, and shown to have an
exceptional applicative potential. 
In this paper we examine the trees extremal with respect to the $F$-index.
%
%
%

\baselineskip=0.30in

\vspace{6mm}

\section{Introduction}

The first and the second Zagreb indices, introduced  in 1972 \cite{GT},
are among the oldest graph--based molecular structure descriptors, so-called
topological indices~\cite{BMBM,DB,new1}.
For a graph $G$  with a vertex set $V(G)$ and an edge set $E(G)$,  these are defined as
$$
M_1(G)=\sum_{v\in V(G)}d(v)^2 \quad \mbox{ and }\quad
M_2(G)=\sum_{uv\in E(G)}d(u)d(v),
$$
where the degree of a vertex $v \in V(G)$, denoted by $d(v)$,  is the number of the
first neighbors of $v$.

Over the years these indices have been thoroughly examined and used to study
molecular complexity, chirality, ZE-isomerism and hetero-systems. More about
their physico--chemical applications and mathematical properties can be found
in \cite{NKMT,new1,new3} and \cite{GD,DG,Z,new2,LY}, respectively, as well as
in the references cited therein. The sum of squares of vertex degrees was
also independently studied in quite a few mathematical papers
\cite{Das1,Das2,new4,new5,new6,new7}.

In an early work on the structure--dependency of the total $\pi$-electron
energy \cite{GT}, beside the  first Zagreb index, it was indicated that another
term on which this energy depends is of the form
$$
F(G)=\sum_{v\in V(G)}d(v)^3\,.
$$
For unexplainable reasons, the above  sum, except (implicitly) in a few works
about the general first Zagreb index \cite{ LiZhao-04, LiZheng-05} and
the zeroth--order general Randi\'c index \cite{HULI2005}, has been completely
neglected. Very recently, Furtula and one of the present authors succeeded to
demonstrate that $F(G)$ has a very promising applicative potential \cite{FuGu-15}.
They proposed that $F(G)$ be named the {\it forgotten topological index}, or shortly
the $F$-{\it index}.

Before the publication of the paper \cite{FuGu-15}, the $F$-index was not
examined as such. On the other hand, in some earlier studies on degree--based
graph invariants, it appears as a special case.

The {\it general first Zagreb index} of a graph $G$  is defined as
%
%
%
\beq
M_1^{\alpha}(G)=\sum_{v\in V(G)}d(v)^{\alpha}
= \sum_{uv\in E(G)} \big[d(u)^{\alpha-1} + d(v)^{\alpha-1} \big], \qquad \mbox{ for  }
                                         \alpha \in \real,  \alpha  \neq 0, \alpha \neq 1, \nonumber
\eeq
and its first occurrence in the literature seems to be the work~\cite{LiZhao-04} by Li and Zhao from 2004.
Observe that $M_1^{3}(G)=F(G)$.
In~\cite{LiZhao-04}  the trees with the first three smallest and largest general first Zagreb index were characterized.
In~\cite{ LiZhao-04, LiZheng-05}, among other things, it was shown that
for $\alpha>1$, the star $S_n$ is the tree on $n$ vertices with maximal
$M_1^{\alpha}$ whereas the path $P_n$ is the tree on $n$ vertices with
minimal $M_1^{\alpha}$-value. Needless to say that these results directly
apply to the $F$-index ($\alpha=3$).
More results and information about the  general first Zagreb index can be found
in~\cite{Gutman-14, HULI2005, MB-Liu, LiZhao-04, LiZheng-05, SuSiong-12,ZZ}.

The {\it Randi\'c (or connectivity)} index was introduced by Randi\'c
in $1975$~\cite{r-cmb-75} and is defined as
$$
R = R(G) = \sum_{u v \in E(G)} \frac{1}{\sqrt{d(u)\,d(v)}}\,.
$$
%
%
%
%
Later in $1977$, Kier and Hall~\cite{kh-nsartrmc77} have introduced the
so-called {\it zeroth--order Randi\'c index}
$$
^0R={^0R}(G) =
\sum_{v\in V(G)}d(v)^{-\frac{1}{2}},
$$
which was generalized in $2005$ by Li and Zheng~\cite{LiZheng-05}.
It was named the {\it zeroth--order general Randi\'c index} and was defined as
$$
^0R_{\alpha}={^0R}_{\alpha}(G) =
\sum_{v\in V(G)}d(v)^{\alpha}
$$
for any real number $\alpha$. Evidently, $^0R_{3}(G)=F(G)$.

In~\cite{HULI2005}  graphs of maximal degree at most 4 (molecular graphs ), with given
number of vertices and edges, and with extremal (maximum or minimum) zeroth--order
general Randi\'c index were characterized. This, again, in the special case $\alpha=3$
renders results for the $F$-index.

Here, we extend the work on trees with extremal values of  $F$-index,
by considering trees with bounded maximal degree.
Since the paths are the trees with minimal $F$-index also in this case,
we consider here the characterization of trees with  bounded maximal degree
that have maximal  $F$-value.

In the sequel we introduce notation that will be used in the rest of the paper.
For $u, v, x, y \in V(G)$ such that $uv \in E(G), xy \notin E(G)$, we denote by $G- uv$
the graph that is obtained by deleting the edge $uv$ from $G$ and by $G+xy$
the graph that is obtained by adding the edge $xy$ to $G$.
By $\Delta = \Delta(G)$ we denote the maximal degree of
$G$.
A sequence $D=[d_1, d_2, \ldots, d_n]$ is {\it graphical} if there is a graph
 whose vertex degrees are $d_i$, $i=1,\dots,n$. If in addition
 $d_1 \geq d_2\geq \dots \geq d_n$, then  $D$ is a
 {\it degree sequence}.
 Furthermore, the notation $D(G ) = [x_1^{n_1}, x_2^{n_2},\cdots, x_t^{n_t}]$ means that
the degree sequence is comprised of  $n_i$ vertices of degree $x_i$, where $i = 1, 2, \cdots, t$.

A tree is said to be {\em rooted} if one of its vertices has been designated as the {\em root}.
In a rooted tree, the  {\em parent} of a vertex is the vertex adjacent to it on the path to the root;
every vertex except the root has a unique parent.
A vertex is a parent of a subtree, if this subtree is attached to the vertex.
A  {\em child} of a vertex $v$ is a vertex of which $v$ is the parent.

\medskip

\section[Results]{Results}
First, we characterize trees with maximal degree at most $\Delta$ that have maximal $F$-index.
\subsection{Trees with bounded maximal degree}
%
\begin{te}  \label{FI:MaxGenTree}
Let $T$ be a tree with maximal $F$-index among the trees with $n$ vertices and
maximal degree at most $\Delta$. Then the following holds:
\begin{itemize}
\item[(i)]  If $(n-2)\mod{(\Delta -1})=0$, then $T$ contains $\frac{n-2}{\Delta -1}$
vertices of degree $\Delta$ and  $\frac{n(\Delta-2)+2}{\Delta -1}$ vertices of degree $1$.
\item[(ii)] Otherwise,  $T$ contains $\frac{n-1-x}{\Delta -1}$ vertices of degree
$\Delta$, $\frac{(n-1)(\Delta-2)+x}{\Delta -1}$ vertices of degree $1$ and
one vertex of degree $x$, where $x$ is uniquely determined  by
$2 \leq x \leq \Delta -1$ and $(n-1 -x) \mod{(\Delta -1)=0}$.
\end{itemize}
\end{te}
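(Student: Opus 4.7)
The plan is to show first, by an edge-shift argument, that any extremal tree has its degree sequence concentrated on $\Delta$ and $1$ with at most one intermediate vertex, and then to extract the exact counts from the handshake identity $\sum_{v} d(v) = 2(n-1)$.

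\medskip

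\noindent\textbf{Structural lemma.} I would first prove that if $T$ is a tree with $\Delta(T) \le \Delta$ containing distinct vertices $u, v$ with $2 \le d(u) \le d(v) \le \Delta - 1$, then $T$ is not extremal. Since $d(u) \ge 2$, one can choose a neighbor $w$ of $u$ that does not lie on the unique path from $u$ to $v$ in $T$ (if $u$ and $v$ are adjacent, any neighbor $w \ne v$ works). Set $T' := T - uw + vw$. Deleting $uw$ separates $w$ from the component containing $v$, and adding $vw$ reconnects the two components, so $T'$ is again a tree on $n$ vertices, with $d_{T'}(u) = d(u) - 1 \ge 1$, $d_{T'}(v) = d(v) + 1 \le \Delta$, and all other degrees unchanged. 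A direct calculation yields
$$
F(T') - F(T) \;=\; \bigl[(d(v)+1)^3 - d(v)^3\bigr] + \bigl[(d(u)-1)^3 - d(u)^3\bigr] \;=\; 3\bigl(d(u) + d(v)\bigr)\bigl(d(v) - d(u) + 1\bigr) \;>\; 0,
$$
contradicting the extremality of $T$.

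\medskip

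\noindent\textbf{Counting the degrees.} By the lemma, in any extremal $T$ every vertex has degree $1$ or $\Delta$, with at most one exceptional vertex of intermediate degree $x \in \{2, \dots, \Delta - 1\}$. Let $n_\Delta$ and $n_1$ count the vertices of degree $\Delta$ and $1$, and let $\varepsilon \in \{0,1\}$ indicate the presence of the intermediate vertex. Combining $n_\Delta + n_1 + \varepsilon = n$ with $\Delta n_\Delta + n_1 + \varepsilon x = 2(n-1)$ produces
$$
(\Delta - 1)\, n_\Delta \;=\; n - 2 - \varepsilon(x - 1).
$$
If $(\Delta - 1) \mid (n - 2)$, then since $x - 1 \in \{1, \dots, \Delta - 2\}$ would destroy divisibility, one is forced into $\varepsilon = 0$, which yields the counts in part (i). Otherwise $\varepsilon = 1$, and $x$ is the unique element of $\{2, \dots, \Delta - 1\}$ satisfying $x \equiv n - 1 \pmod{\Delta - 1}$, yielding the counts in part (ii). Realizability of the prescribed degree sequence as a tree is automatic since all entries are positive and sum to $2(n-1)$.

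\medskip

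The main obstacle is the legitimacy of the edge shift in the lemma: the neighbor $w$ must be chosen off the path from $u$ to $v$ so that $T'$ remains a tree, and the hypothesis $d(v) \le \Delta - 1$ is essential to keep the maximum degree bounded by $\Delta$. Once these conditions are in place, the positive factor $d(v) - d(u) + 1$ makes the $F$-increase automatic, and the remainder of the argument is purely arithmetical.
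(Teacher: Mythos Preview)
Your argument is correct and follows essentially the same line as the paper's proof: an edge shift that moves a neighbor from a lower-degree intermediate vertex to a higher-degree one strictly increases $F$, forcing at most one vertex of degree strictly between $1$ and $\Delta$, after which the handshake identity pins down $n_\Delta$, $n_1$, and $x$. Your version is in fact a bit tighter than the paper's: you select $w$ off the $u$--$v$ path so that $T' = T - uw + vw$ is immediately seen to be a tree, you obtain the contradiction in a single step rather than iterating, and you record the increment explicitly as $3\bigl(d(u)+d(v)\bigr)\bigl(d(v)-d(u)+1\bigr)$; you also note realizability of the resulting degree sequence, which the paper leaves implicit.
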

\begin{proof}
We may assume that $T$ is a rooted tree, whose root is a vertex with degree $\Delta$.
First, we show that  $T$ is comprised of vertices of degrees $\Delta$ and $1$,
and in some cases of one additional vertex $u$, with  $2 \leq d(u) \leq \Delta -1$.
Assume that this is not true and that $T$ has more than one such vertex.
Denote by $V_d$ the set of all vertices of $T$  whose degrees  are different
from $\Delta$ and $1$. We assume that for the degrees of the vertices of $V_d$ the
order $d(u)  \geq \dots  \geq d(v)$ holds. Let $w$ be a child vertex of $v$.

Delete the edge $vw$ and add the edge $uw$ to $T$, obtaining a tree $T^\prime$.
After this transformation, the degree of $u$ increases by one, while the degree of $v$
decreases by one, and the degree set $V_d$ changed into $V_d^\prime$.
It holds that
\beqs
F(T^{\prime}) - F(T) =  (d(u)+1)^3-d(u)^3+(d(w)-1)^3-d(w)^3 > 0\,.
\eeqs
It holds that $|V_d^\prime| \leq |V_d|$, with strict inequality if $d(u)=\Delta -1$ or $d(v)=2$.
If $|V_d^\prime| \geq 2$ we choose from $V_d^\prime$ a vertex with maximal and a
vertex with minimal degree, and we repeat the
above operation, obtaining a tree with larger $F$-index.
We proceed on iteratively with the same type of transformation until the transformed $V_d$
has cardinality $1$ or $0$, and thus obtain a contradiction to the assumption
that  $T$ has more that one vertex with degree different from
$1$ and $\Delta$.
It follows that $T$ has a degree sequence $[\Delta^{n_{ \Delta}}, 1^{{n_1}}]$ or
$[\Delta^{n_{ \Delta}}, x^{1}, 1^{{n_1}}]$,
$2 \leq x \leq \Delta -1$.

Observe that any transformation on $T$ will result in a tree with $|V_d|>1$ or will disconnect $T$.
Thus, we conclude that $T$ must have one of the two above presented degree sequences.

Next, with respect to the cardinality of $|V_d|$, we determine the parameters
$n_\Delta$\,, $n_1$\,, and $x$.

\smallskip
\noindent
{\bf Case~$1$.}  $|V_d|=0.$
In this case
the degree sequence of $T$ is $[\Delta^{n_{ \Delta}}, 1^{{n_1}}]$ where the equations
\beq \label{deggen0}
&&\sum_{i=1}^ {\Delta} \, n_i = n_{ \Delta} + n_1  = n \qquad  \mbox{and} \qquad
 \sum_{i=1}^ {\Delta} \, i \; n_i =  \Delta n_{ \Delta} + n_1  = 2(n-1) \nonumber
\eeq
hold.
The above equations have the integer solution
\beqs
n_\Delta =  \frac{n-2}{\Delta -1}, \quad\quad n_1 = \frac{n(\Delta-2)+2}{\Delta -1},
\eeqs
for $(n-2)\mod{(\Delta -1})=0$.

\smallskip
\noindent
{\bf Case~$2.$}  $|V_d|=1.$
Here the degree sequence of $T$ is $[\Delta^{n_{ \Delta}}, x^{1}, 1^{{n_1}}]$, with
\beq \label{deggen}
\sum_{i=1}^ {\Delta} \, n_i = n_{ \Delta} + n_1 + 1 = n \qquad  \mbox{and} \qquad
  \sum_{i=1}^ {\Delta} \, i \; n_i =  \Delta n_{ \Delta} + n_1 + x  = 2(n-1). \nonumber
\eeq
The above equations give the integer solution
\beqs
n_\Delta =  \frac{n-1-x}{\Delta -1}, \quad\quad n_1 = \frac{(n-1)(\Delta-2)+x}{\Delta -1},
\eeqs
for $(n-1 -x) \mod{(\Delta -1)=0}$.
\end{proof}

As straightforward consequence of Theorem~\ref{FI:MaxGenTree}, we obtain the
maximal value of the $F$-index for trees with maximal degree $\Delta$.

\begin{co}  \label{co-FI:MaxGenTree1}
Let $T$ be a tree with maximal $F$-index among the trees with $n$ vertices and
maximal degree at most $\Delta$. Then,
\begin{itemize}
\item[(i)]  if $(n-2)\mod{(\Delta -1})=0$,
\beqs
 F(T) =   \Delta( \Delta+1)(n-2)+2(n-1);
 \eeqs
\item[(ii)] otherwise,
 \beqs
 F(T) =  ( \Delta^2 + \Delta +2)(n-1) - ( \Delta^2 + \Delta +1) x + x^3,
 \eeqs
 where $x$ is uniquely determined by
 $2 \leq x \leq \Delta -1$ and $n-1 -x \mod{(\Delta -1)=0}$.
\end{itemize}
\end{co}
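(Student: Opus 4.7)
The plan is to derive both formulas by direct substitution of the degree sequences supplied by Theorem~\ref{FI:MaxGenTree} into the definition $F(T)=\sum_{v\in V(T)} d(v)^3$, and then simplify. Since the extremal trees are already classified (up to their degree sequence) by the preceding theorem, no new structural argument is needed: only the bookkeeping of the contributions from the different vertex classes.

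For part (i), the hypothesis $(n-2)\bmod(\Delta-1)=0$ places us in Case~1 of Theorem~\ref{FI:MaxGenTree}, so $T$ has exactly $n_\Delta=\frac{n-2}{\Delta-1}$ vertices of degree $\Delta$ and $n_1=\frac{n(\Delta-2)+2}{\Delta-1}$ vertices of degree $1$. I would write
\[
F(T)=n_\Delta\,\Delta^3 + n_1\cdot 1^3 =\frac{(n-2)\Delta^3+n(\Delta-2)+2}{\Delta-1},
\]
then factor $\Delta^3-1=(\Delta-1)(\Delta^2+\Delta+1)$ to collect terms, eventually reaching $\Delta(\Delta+1)(n-2)+2(n-1)$. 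The only subtle point is tracking the telescoping between the $\Delta^3$ and constant contributions; this is routine.

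For part (ii), we fall into Case~2 of Theorem~\ref{FI:MaxGenTree}, so $T$ has one extra vertex of degree $x$ with $2\le x\le\Delta-1$, in addition to $n_\Delta=\frac{n-1-x}{\Delta-1}$ vertices of degree $\Delta$ and $n_1=\frac{(n-1)(\Delta-2)+x}{\Delta-1}$ vertices of degree $1$. Substituting into the definition gives
\[
F(T)=\frac{(n-1-x)\Delta^3}{\Delta-1}+\frac{(n-1)(\Delta-2)+x}{\Delta-1}+x^3,
\]
and the task reduces to combining the two fractions over $\Delta-1$. Using the identity $\Delta^3-1=(\Delta-1)(\Delta^2+\Delta+1)$ on the coefficient of $x$ and expanding $(\Delta^3+\Delta-2)=(\Delta-1)(\Delta^2+\Delta+2)$ on the coefficient of $n-1$, the expression collapses to $(\Delta^2+\Delta+2)(n-1)-(\Delta^2+\Delta+1)x+x^3$.

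There is no real obstacle; the step that requires the most care is spotting the two factorizations $\Delta^3-1=(\Delta-1)(\Delta^2+\Delta+1)$ and $\Delta^3+\Delta-2=(\Delta-1)(\Delta^2+\Delta+2)$, which turn the rational expressions into the clean polynomial forms claimed. The divisibility assertions in Theorem~\ref{FI:MaxGenTree} guarantee that $n_\Delta$ and $n_1$ are integers, so no case analysis beyond Cases~1 and 2 is needed.
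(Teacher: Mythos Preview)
Your proposal is correct and follows essentially the same route as the paper: substitute the degree sequences from Theorem~\ref{FI:MaxGenTree} into $F(T)=\sum_v d(v)^3$ and simplify over the common denominator $\Delta-1$. The paper's proof is slightly terser (it does not spell out the two factorizations you mention), but the argument is identical in substance.
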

\begin{proof}
If $(n-2)\mod{(\Delta -1})=0$, by Theorem~\ref{FI:MaxGenTree}, $T$ has degree sequence
$[\Delta^{n_{ \Delta}}, 1^{{n_1}}]$, where $n_{ \Delta}=(n-2)/(\Delta -1)$ and
 $n_1=(n(\Delta-2)+2)/(\Delta -1)$. Thus,
\beqs
 F(T) &=& \Delta^3 \; \frac{n-2}{\Delta -1} +  \frac{n(\Delta-2)+2}{\Delta -1} = \Delta( \Delta+1)(n-2)+2(n-1).
\eeqs

\noindent
Otherwise, $T$ has degree sequence $[\Delta^{n_{ \Delta}}, x^{1}, 1^{{n_1}}]$, where
$n_{ \Delta}=(n-1-x)/(\Delta -1)$, $n_1=((n-1)(\Delta-2)+x)/(\Delta -1)$.
Then,
 \beqs
 F(T) &=& \Delta^3 \; \frac{n-(x+1)}{\Delta -1} +  \frac{(n-1)(\Delta-2)+x}{\Delta -1} + x^3  \\
                     &=& \frac{(\Delta^3 +\Delta -2) n +  (1-\Delta^3 ) x - (\Delta^3 +\Delta -2)}{\Delta -1} + x^3  \\
                     &=& ( \Delta^2 + \Delta +2)(n-1) - ( \Delta^2 + \Delta +1) x + x^3.
\eeqs
\end{proof}

%

\subsection{Molecular trees}

\noindent
For the special case of molecular trees, i.e., $\Delta \leq 4$, we obtain the following result.
Recall that such trees provide the graph representation of the so-called saturated
hydrocarbons or alkanes \cite{GuPo,BMBM} and are of major importance in theoretical
chemistry.

\begin{te} \label{F-index-trees}
Let $T$ be a molecular tree with maximal $F$-index among the trees with $n$ vertices.
Then the following holds:
\begin{itemize}
\item  If $(n-2)\mod{3}=0$, then $T$ contains $\frac{n-2}{3}$ vertices of degree
$4$ and $\frac{2n+2}{3}$ vertices of degree $1$.  Its $F$-index is
$$F(T)= 22n-42.$$
\item Otherwise,  $T$ contains $\frac{n-1-x}{3}$ vertices of degree
$4$, $\frac{2(n-1)+x}{3}$ vertices of degree $1$ and
 one vertex of degree $x$, where $x$ is uniquely determined  by
 $2 \leq x \leq 3$ and $(n-1 -x) \mod{3=0}$. Its $F$-index is
$$F(T)= 22(n-1)-21x+x^3.$$
\end{itemize}
\end{te}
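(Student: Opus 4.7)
The plan is to deduce this result as a direct specialization of Theorem~\ref{FI:MaxGenTree} and Corollary~\ref{co-FI:MaxGenTree1} to the molecular case $\Delta = 4$. Since a molecular tree is exactly a tree with maximal degree at most $4$, the structural characterization in Theorem~\ref{FI:MaxGenTree} applies verbatim upon substituting $\Delta = 4$, which gives $\Delta - 1 = 3$ everywhere in the divisibility conditions and vertex counts.

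First, I would invoke Theorem~\ref{FI:MaxGenTree} with $\Delta = 4$. In case~(i), the congruence $(n-2)\mod(\Delta-1)=0$ becomes $(n-2)\mod 3 = 0$, and the counts specialize to $n_\Delta = (n-2)/3$ vertices of degree $4$ and $n_1 = (n(\Delta-2)+2)/(\Delta-1) = (2n+2)/3$ vertices of degree $1$, exactly matching the first bullet. In case~(ii), $(n-1-x)\mod 3 = 0$ with $2 \le x \le 3$, and the counts read $n_\Delta = (n-1-x)/3$ vertices of degree $4$ and $n_1 = ((n-1)(\Delta-2)+x)/(\Delta-1) = (2(n-1)+x)/3$ vertices of degree $1$, together with one vertex of degree $x$.

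For the numerical values of $F(T)$, I would plug $\Delta=4$ directly into Corollary~\ref{co-FI:MaxGenTree1}. In case~(i), $\Delta(\Delta+1)(n-2) + 2(n-1) = 20(n-2) + 2(n-1) = 22n - 42$, yielding the claimed $F(T)=22n-42$. In case~(ii), $(\Delta^2+\Delta+2)(n-1) - (\Delta^2+\Delta+1)x + x^3 = 22(n-1) - 21\,x + x^3$, which is the formula asserted in the second bullet.

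There is essentially no obstacle here: the theorem is a corollary-style restatement of the earlier general result at the specific value $\Delta=4$. The only thing worth verifying explicitly is that the arithmetic simplifications $4\cdot 5 = 20$, $16+4+2 = 22$, and $16+4+1 = 21$ come out as stated, and that the divisibility condition $(n-2)\mod(\Delta-1)=0$ is indeed equivalent to $(n-2)\mod 3 = 0$ (and analogously in case~(ii)), both of which are immediate.
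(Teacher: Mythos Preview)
Your argument is correct: since Theorem~\ref{FI:MaxGenTree} and Corollary~\ref{co-FI:MaxGenTree1} are stated for trees with maximal degree \emph{at most} $\Delta$, specializing to $\Delta=4$ immediately gives the molecular case, and your arithmetic checks are right.

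The paper takes a longer route. It first verifies the claim by hand for $n=2,\dots,6$, and then for $n\ge 7$ argues separately that the extremal molecular tree must actually attain maximal degree exactly $4$: it compares, via Corollary~\ref{co-FI:MaxGenTree1}, the maximal $F$-values over trees with $\Delta=2$, $\Delta=3$, and $\Delta=4$ across the relevant residue classes of $n$, and shows $F(T_4)>F(T_3)>F(T_2)$ in each case before finally invoking Theorem~\ref{FI:MaxGenTree} with $\Delta=4$. Given the ``at most $\Delta$'' phrasing of Theorem~\ref{FI:MaxGenTree}, this comparison is redundant --- your direct substitution already subsumes it. The paper's extra care seems to guard against the fact that the \emph{proof} of Theorem~\ref{FI:MaxGenTree} begins by rooting $T$ at a vertex of degree~$\Delta$, which is not a priori available when $n\le\Delta$; handling small $n$ explicitly and confirming that the optimum really has a vertex of degree $4$ patches that. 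Your approach is cleaner but relies on Theorem~\ref{FI:MaxGenTree} being taken at face value for all $n$; if you want to be fully self-contained, a one-line remark that the formulas in Theorem~\ref{FI:MaxGenTree} still yield the correct degree sequence (with $n_\Delta=0$) when $n\le 4$ would close that residual gap.
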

\begin{proof}
For  $n=2,3,4,5$ the star $S_n$ maximizes the $F$-index
\cite{LiZhao-04} .
 For $n=6$ and $\Delta \leq 4$, the possible degree sequences of $T$ are
 $(4,2,1,1,1,1), (3,3,1,1,1,1), (3,2,2,1,1,1)$ and $(2,2,2,2,1,1)$.
The first of these degree sequences corresponds to the largest $F$-index, $F(T)$.

Thus, the corresponding  degree sequences of $T$ for $n=2,3,4,5,6$ satisfy the
theorem.

In the rest of the proof we assume that  $n \geq 7$.
First, we show that $T$ has maximal degree $\Delta=4$.
A tree with maximal degree $\Delta$ we denote by $T_{\Delta}$.
By Corollary~\ref{co-FI:MaxGenTree1}$(i)$, $F(T_{2})=8n-14$.
Further, we distinguish between two case.

\smallskip
\noindent
{\bf Case 1.}  $(n-2) \mod 2 =0$.

\noindent
By Corollary~\ref{co-FI:MaxGenTree1}$(i)$, we have that $F(T_{3})=14n -26$.
Since $(n-2) \mod 2 =0$, it follows that $(n-2) \mod 3 \neq 0$, and by 
Corollary~\ref{co-FI:MaxGenTree1}$(ii)$,
$F(T_{4})=22(n-1)-21x+x^3$, where $x=2$ or $x=3$.
For $n \geq 7$,  $F(T_{4}) >F(T_{3}) > F(T_{2})$ is satisfied.

\smallskip
\noindent
{\bf Case 2.} $(n-2) \mod 2 \neq 0$.

\noindent
By  Corollary~\ref{co-FI:MaxGenTree1}$(ii)$, $F(T_{3})=14(n-1)-13x+x^3$, where $x=2$ or $x=3$.

\smallskip
\noindent
{\bf Subcase 2.1.} $(n-2) \mod 3 = 0$.

\noindent
By Corollary~\ref{co-FI:MaxGenTree1}$(i)$,
$F(T_{4})=22n-42$.
For $n \geq 7$ and $x=2,3$, it holds that $F(T_{4}) >F(T_{3}) > F(T_{2})$.

\smallskip
\noindent
{\bf Subcase 2.2.} $(n-2) \mod 3 \neq 0$.

\noindent
By Corollary~\ref{co-FI:MaxGenTree1}$(ii)$,
$F(T_{4})=22(n-1)-21x+x^3$.
Again, a straightforward calculation yields that $F(T_{4}) >F(T_{3}) > F(T_{2})$.

\smallskip
\noindent
So, we have shown  that for $n \geq 3,4,5$  the tree with maximal $F$-index
has maximal degree  $n-1$, and for $n \geq 5$ it has  maximal degree $\Delta=4$.
Setting $\Delta=4$ in Theorem~\ref{FI:MaxGenTree} and Corollary~\ref{co-FI:MaxGenTree1},
we complete the proof.
\end{proof}

\noindent
We would like to note that the results of Theorem~\ref{F-index-trees} coincide
with the results of Theorem~$2.2$ in \cite{HULI2005}, pertaining to the zeroth--order 
general Randi\'c index of a graph, when $\alpha = 3$ and $m=n-1$.

In the sequel we present some computational results obtain by integer linear programming 
and by exhaustive computer search of trees with maximal $F$-index.

\subsection[ILP Model]{Computational results} \label{Sub:LPmodel}
%

Degree sequences of trees with bounded maximal degree and maximal $F(T)$ can
be completely described by solving the integer linear programming problem 
in Lemma \ref{FILP:Lem}.
We denote the number of vertices of tree $T$ of degree $i$ by $n_i$, $\forall \, i = 1, 2, \cdots, \Delta$
and the number of edges, connecting a vertex of degree $i$ with a vertex of degree $j$, by $m_{ij}$.
\begin{lemma}  \label{FILP:Lem}
Let $T$ be a tree with maximal $F$-index among all trees with $n$ vertices and maximal degree $\Delta$.
Then, the trees with maximal $F$-index are completely described by solving the maximization problem
%
\begin{alignat*}{2}
    \text{maximize } ~&F = \sum_{i = 1}^{\Delta} \;\;  i^3  \; n_i  \\
    \text{subject to } ~& \sum_{i = 1}^{\Delta }  \, n_i \; =\; n ,         \\
                                 & \sum_{i = 1}^{\Delta }  \, i \, n_i \;=\;  2 (n-1),    \\
                                 & \sum_{i= 2 } ^{\Delta-1 }  \, n_i  \; \leq \; 1,          \\
                                 & \sum_{j = 1, \,  j \neq i } ^{\Delta } m_{ij}  + 2 m_{ii}  \; =\; i n_i   ~~~~~~~  1 \; \leq\; i \; \leq \; \Delta,    \\
                                 & 0 \; \leq\;  n_i \; \leq \;  n-1 ,  ~~~~~~~~~~~~~~~~1 \; \leq\; i \; \leq \; \Delta,  \\
                                 & 0 \; \leq\;  m_{ij} \; \leq \;  n - 2 ,  ~~~~~~~~~~~~~~~ 1 \; \leq\; i, \, j \; \leq \; \Delta.
\end{alignat*}

\end{lemma}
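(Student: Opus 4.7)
The plan is to verify that each constraint in the ILP arises either as a universal structural identity for trees on $n$ vertices with maximal degree $\Delta$ or as a necessary condition on the degree sequence of an $F$-maximal tree, and then to observe that the objective coincides with $F(T)$ by definition.

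First I would fix the intended meaning of the decision variables: let $n_i$ denote the number of vertices of degree $i$ in $T$ and let $m_{ij}$ (with $m_{ij}=m_{ji}$) denote the number of edges whose endpoints have degrees $i$ and $j$. Then the objective $\sum_{i=1}^{\Delta} i^3 n_i$ is literally $\sum_{v\in V(T)} d(v)^3 = F(T)$, so the ILP value and the maximal $F$-index agree. The equality $\sum_{i} n_i = n$ is immediate; the equality $\sum_{i} i\,n_i = 2(n-1)$ follows from the handshake lemma and the fact that a tree on $n$ vertices has $n-1$ edges. The family $\sum_{j\neq i} m_{ij} + 2 m_{ii} = i\, n_i$ is a double-counting identity: the left-hand side counts, for each vertex of degree $i$, the edges incident to it (with internal $m_{ii}$-edges contributing twice), while the right-hand side counts the same endpoint incidences directly. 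The box constraints on $n_i$ and $m_{ij}$ are trivial.

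The only non-obvious constraint is $\sum_{i=2}^{\Delta-1} n_i \leq 1$, and this is exactly the structural statement established in the first part of Theorem~\ref{FI:MaxGenTree}: by the edge-exchange argument there, any tree with maximal $F$-index on $n$ vertices and maximal degree at most $\Delta$ has at most one vertex whose degree lies strictly between $1$ and $\Delta$. I would therefore cite Theorem~\ref{FI:MaxGenTree} to justify including this inequality; it is generally not valid for arbitrary trees, but it is valid for every $F$-maximal tree, hence it restricts the feasible region without discarding any optimum.

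To finish, I would check sufficiency: the three constraints (1), (2) and (3), once combined with integrality, force $(n_1,n_\Delta)$ or $(n_1,n_x,n_\Delta)$ to take precisely the values computed in Theorem~\ref{FI:MaxGenTree} (the two cases there correspond to $\sum_{i=2}^{\Delta-1} n_i = 0$ and $=1$, respectively), so every optimal ILP solution reproduces a degree sequence listed in that theorem. Conversely, any such degree sequence is realizable by an actual tree (attach pendant edges to vertices of degree $\Delta$ and, if needed, to the unique vertex of intermediate degree $x$), and the remaining $m_{ij}$ variables can be assigned consistently with the type-4 counting identities from that realization. The only real obstacle in the plan is the appeal to Theorem~\ref{FI:MaxGenTree} for constraint (3); all the other equalities are bookkeeping, and the $m_{ij}$ variables, though not appearing in the objective, are retained only to certify that the degree sequence is graphical.
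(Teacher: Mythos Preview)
The paper does not supply a proof of Lemma~\ref{FILP:Lem}; it is stated in the computational section as a direct reformulation of the structural information already established in Theorem~\ref{FI:MaxGenTree}, and is then used only to drive the Matlab/Sage experiments. Your proposal therefore does not conflict with a ``paper proof'' --- it simply makes explicit the justification the paper leaves implicit.

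Your argument is correct and is exactly the intended one: the objective and the first two equality constraints are accounting identities, the degree--incidence identities $\sum_{j\neq i} m_{ij}+2m_{ii}=i\,n_i$ are the standard double count, and the only substantive constraint, $\sum_{i=2}^{\Delta-1} n_i\le 1$, is precisely the conclusion of (the first part of) Theorem~\ref{FI:MaxGenTree}. Your sufficiency check --- that the remaining integer solutions coincide with the two degree sequences of Theorem~\ref{FI:MaxGenTree} and that each such sequence is realizable as a tree --- is the right way to close the loop. One small remark: the lemma is phrased with ``maximal degree $\Delta$'' while Theorem~\ref{FI:MaxGenTree} uses ``at most $\Delta$''; you may want to note that, for $n\ge \Delta+1$, Corollary~\ref{co-FI:MaxGenTree1} shows the optimum is strictly increasing in $\Delta$, so the two formulations have the same optimizers.
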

 %

For experimental proposes we first consider the molecular trees of orders $n = 4, \cdots, 20$.
In this case, the previous integer linear programming model can be represented as follows.
\beq \label{chem-maxFI}
\left.
\begin{aligned}
&\mbox{maximize} &
                           F = \; &n_1&  &+& 8  &n_2&  &+& 27 &n_3&   &+& 64 &n_4&   & &                       \\ 
&\qquad \mbox{s. t. }&   &n_1&  &+&     &n_2&  &+&      &n_3&   &+&      & n_4& &             &= &\; n,\\
&                            &   &n_1&  &+& 2 &n_2&   &+& 3   &n_3&   &+& 4   & n_4&  &               &= &\;2 (n-1),\\
&                            &   &      &  &  &    &n_2&   &+&      &n_3&   &  &      &       &   & &   \leqslant&\;1,\\ 
&                            & - &n_1&  &+& &m_{12}& &+& &m_{13}& &+&     & m_{14}& &             &= &\;0,\\
&                            & -2 &n_2&  &+& &m_{12}& &+& 2&m_{22}& &+& & m_{23}&+& m_{24}&= &\;0,\\
&                            & -3 &n_3&  &+& &m_{13}& &+& &m_{23}& &+& & 2m_{33}&+& m_{34}&= &\;0, \\
&                            & -4 &n_4&  &+& &m_{14}& &+& &m_{24}& &+& & m_{34}&+& 2 m_{44}&=&\;0,   
\end{aligned}
\right\} \\
   0  \leqslant  n_i  \leqslant  n-1,  \quad  0  \leqslant  m_{ij}   \leqslant n-2,  \quad  1  \leq  i  \leq  4, \quad  1  \leq  j \leq  4. \qquad\qquad  \nonumber
\eeq
The above integer linear program was implemented in Matlab\cite{matlab-soft},
and  run on $2.3$ GHz Intel Core i$5$ processor with $4$GB 1333 MHz DDR3 RAM.
Beside the degree sequence that maximize the $F$-index,
it gives only one corresponding tree. However, for a given degree sequence there usually 
exist several non-isomorphic trees with the same degree sequence.
In view of this, the degree sequence as well as all corresponding trees were obtained by 
exhaustive computer search using the mathematical software
Sage \cite{sage}. In the contrast to the integer linear program that gives a solution 
for $n$ of order $1000000$ and $\Delta \leq 400$ in less than a minute,
the exhaustive computer search with Sage for $n=20$  took several hours.
On the other hand, due to the closed--form solutions in Theorem~\ref{FI:MaxGenTree},
we obtain the degree sequence that maximizes the $F$-index for
arbitrary $n$ and arbitrary $\Delta$ in few milliseconds.

The experimental results for $\Delta=4$ and $\Delta=5$  are given in 
Tables \ref{tableChemD4} and \ref{tableTreesD5}, respectively,
while their corresponding trees are given in Figures \ref{pr_sl} and \ref{TreesDelta5}.

%
%
\newpage
\begin{center}
\setlength{\extrarowheight}{0.0cm}
\begin{longtable}{|c|p{2.6cm}p{0.6cm}|p{0.6cm}p{0.6cm}p{0.6cm}p{0.6cm}p{0.9cm}p{0.9cm}p{0.9cm}p{0.3cm}|c|}
\hline 
\multicolumn{1}{|c|}{\multirow{2}{*}{$n$}}  &  \multicolumn{2}{c|}{\textbf{Sage}} &  \multicolumn{8}{c|}{\textbf{LP solution - Matlab}} & \multicolumn{1}{c|}{\multirow{2}{*}{\textbf{$F$}}} \\
\cline{2-11}\rule[2mm]{0mm}{1.5mm}
  & \multicolumn{1}{c}{\textbf{$D(T)$}} & \multicolumn{1}{c|}{\textbf{$\#T$}} & \multicolumn{8}{c|}{\textbf{Non-zero variables}} &  \\ \hline
\endhead
\multirow{2}{*}{4} & \multirow{2}{*}{$[3^{1}, 1^{3}]$}&\multirow{2}{*}{1}&\multirow{2}*{\minitab[c]{$n_1$ \\ 3}} & \multirow{2}*{\minitab[c]{$n_3$\\ 1}} & &&
                                \multirow{2}*{\minitab[c]{$m_{1,3}$ \\3}} &&&&\multirow{2}{*}{30}  \rule[-2mm]{0mm}{1mm}\\
&& &&&&&&&&&  \rule[-2mm]{0mm}{1mm}\\
\multirow{2}{*}{5} & \multirow{2}{*}{$[4^{1}, 1^{4}]$}&\multirow{2}{*}{1}&\multirow{2}*{\minitab[c]{$n_1$ \\ 4}} & \multirow{2}*{\minitab[c]{$n_4$\\ 1}} & &&
                                \multirow{2}*{\minitab[c]{$m_{1,4}$ \\4}} &&&&\multirow{2}{*}{68}  \rule[-2mm]{0mm}{1mm}\\
&& &&&&&&&&&  \rule[-2mm]{0mm}{1mm}\\
\multirow{2}{*}{6} & \multirow{2}{*}{$[4^{1}, 2^{1}, 1^{4}]$}&\multirow{2}{*}{1}&\multirow{2}*{\minitab[c]{$n_1$ \\ 4}} & \multirow{2}*{\minitab[c]{$n_2$\\ 1}} & \multirow{2}*{\minitab[c]{$n_4$\\ 1}} &&                             \multirow{2}*{\minitab[c]{$m_{1,2}$ \\1}} &\multirow{2}*{\minitab[c]{$m_{1,4}$ \\ 3}} &\multirow{2}*{\minitab[c]{$m_{2,4}$ \\ 1}}   &&\multirow{2}{*}{76}  \rule[-2mm]{0mm}{1mm}\\
&& &&&&&&&&&  \rule[-2mm]{0mm}{1mm}\\
\multirow{2}{*}{7} & \multirow{2}{*}{$[4^{1}, 3^{1}, 1^{5}]$}&\multirow{2}{*}{1}&\multirow{2}*{\minitab[c]{$n_1$ \\ 5}} & \multirow{2}*{\minitab[c]{$n_3$\\ 1}} & \multirow{2}*{\minitab[c]{$n_4$\\ 1}} &&                             \multirow{2}*{\minitab[c]{$m_{1,3}$ \\2}} &\multirow{2}*{\minitab[c]{$m_{1,4}$ \\ 3}} &\multirow{2}*{\minitab[c]{$m_{3,4}$ \\ 1}}   &&\multirow{2}{*}{96}  \rule[-2mm]{0mm}{1mm}\\
&& &&&&&&&&&  \rule[-2mm]{0mm}{1mm}\\
\multirow{2}{*}{8} & \multirow{2}{*}{$[4^{2}, 1^{6}]$}&\multirow{2}{*}{1}&\multirow{2}*{\minitab[c]{$n_1$ \\ 6}} & \multirow{2}*{\minitab[c]{$n_4$\\ 2}} & &&
                                \multirow{2}*{\minitab[c]{$m_{1,4}$ \\6}} &   \multirow{2}*{\minitab[c]{$m_{4,4}$ \\1}} &&& \multirow{2}{*}{134}  \rule[-2mm]{0mm}{1mm}\\
&& &&&&&&&&&  \rule[-2mm]{0mm}{1mm}\\
\multirow{2}{*}{9} & \multirow{2}{*}{$[4^{2}, 2^{1}, 1^{6}]$}&\multirow{2}{*}{2}&\multirow{2}*{\minitab[c]{$n_1$ \\ 6}} & \multirow{2}*{\minitab[c]{$n_2$\\ 1}} & \multirow{2}*{\minitab[c]{$n_4$\\ 2}} &&                             \multirow{2}*{\minitab[c]{$m_{1,4}$ \\6}} &\multirow{2}*{\minitab[c]{$m_{2,4}$ \\ 2}} &&& \multirow{2}{*}{142}  \rule[-2mm]{0mm}{1mm}\\
&& &&&&&&&&&  \rule[-2mm]{0mm}{1mm}\\
\multirow{2}{*}{10} &\multirow{2}{*}{$[4^{2}, 3^{1}, 1^{7}]$}  &\multirow{2}{*}{2} & \multirow{2}*{\minitab[c]{$n_1$ \\ 7}}  & \multirow{2}*{\minitab[c]{$n_3$\\ 1}}  & \multirow{2}*{\minitab[c]{$n_4$\\ 2}}
                             & & \multirow{2}*{\minitab[c]{$m_{1,3}$ \\1}} &\multirow{2}*{\minitab[c]{$m_{1,4}$ \\ 6}} & \multirow{2}*{\minitab[c]{$m_{3,4}$ \\ 2}}& &\multirow{2}{*}{162}  \rule[-2mm]{0mm}{1mm}\\
&& &&&&&&&&&   \rule[-2mm]{0mm}{1mm}\\
%
\multirow{2}{*}{11} &\multirow{2}{*}{$[4^{3}, 1^{8}]$} &\multirow{2}{*}{1} & \multirow{2}*{\minitab[c]{$n_1$ \\ 8}}  & \multirow{2}*{\minitab[c]{$n_4$\\ 3}}  & & &
                                 \multirow{2}*{\minitab[c]{$m_{1,4}$ \\8}} &\multirow{2}*{\minitab[c]{$m_{4,4}$ \\ 2}} && &\multirow{2}{*}{200}  \rule[-2mm]{0mm}{1mm}\\
&& &&&&&&&&&   \rule[-2mm]{0mm}{1mm}\\
%
\multirow{2}{*}{12} & \multirow{2}{*}{$[4^{3}, 2^{1}, 1^{8}]$}&\multirow{2}{*}{3}&\multirow{2}*{\minitab[c]{$n_1$ \\ 8}} & \multirow{2}*{\minitab[c]{$n_2$\\ 1}} & \multirow{2}*{\minitab[c]{$n_4$\\ 3}} &                &\multirow{2}*{\minitab[c]{$m_{1,4}$ \\ 8}} &\multirow{2}*{\minitab[c]{$m_{2,4}$ \\ 2}}   &\multirow{2}*{\minitab[c]{$m_{4,4}$ \\ 1}}&&\multirow{2}{*}{208}  \rule[-2mm]{0mm}{1mm}\\
&& &&&&&&&&&  \rule[-2mm]{0mm}{1mm}\\
%
\multirow{2}{*}{13} & \multirow{2}{*}{$[4^{3}, 3^{1}, 1^{9}]$}&\multirow{2}{*}{4}&\multirow{2}*{\minitab[c]{$n_1$ \\ 9}} & \multirow{2}*{\minitab[c]{$n_3$\\ 1}} & \multirow{2}*{\minitab[c]{$n_4$\\ 3}} &&                             \multirow{2}*{\minitab[c]{$m_{1,4}$ \\9}} &\multirow{2}*{\minitab[c]{$m_{3,4}$ \\ 3}} &  &&\multirow{2}{*}{228}  \rule[-2mm]{0mm}{1mm}\\
&& &&&&&&&&&  \rule[-2mm]{0mm}{1mm}\\
%
\multirow{2}{*}{14} &\multirow{2}{*}{$[4^{4}, 1^{10}]$}&\multirow{2}{*}{2} & \multirow{2}*{\minitab[c]{$n_1$ \\ 10}}  & \multirow{2}*{\minitab[c]{$n_4$\\ 4}}  &  &  &
                                 \multirow{2}*{\minitab[c]{$m_{1,4}$ \\10}} &\multirow{2}*{\minitab[c]{$m_{4,4}$ \\ 3}} && &\multirow{2}{*}{266}  \rule[-2mm]{0mm}{1mm}\\
&& &&&&&&&&&   \rule[-2mm]{0mm}{1mm}\\
%
\multirow{2}{*}{15} &\multirow{2}{*}{$[4^{4}, 2^{1}, 1^{10}]$}&\multirow{2}{*}{6} & \multirow{2}*{\minitab[c]{$n_1$ \\ 10}}  & \multirow{2}*{\minitab[c]{$n_2$\\ 1}}  &\multirow{2}*{\minitab[c]{$n_4$\\ 4}} &  &\multirow{2}*{\minitab[c]{$m_{1,4}$ \\ 10}} &\multirow{2}*{\minitab[c]{$m_{2,4}$ \\ 2}}&\multirow{2}*{\minitab[c]{$m_{4,4}$ \\ 2}}  & &\multirow{2}{*}{274}  \rule[-2mm]{0mm}{1mm}\\
&& &&&&&&&&&   \rule[-2mm]{0mm}{1mm}\\
\multirow{2}{*}{16} & \multirow{2}{*}{$[4^{4}, 3^{1}, 1^{11}]$} &\multirow{2}{*}{8}&\multirow{2}*{\minitab[c]{$n_1$ \\ 11}} & \multirow{2}*{\minitab[c]{$n_3$\\ 1}} & \multirow{2}*{\minitab[c]{$n_4$\\ 4}} &&                             \multirow{2}*{\minitab[c]{$m_{1,4}$ \\11}} &\multirow{2}*{\minitab[c]{$m_{3,4}$ \\3}} & \multirow{2}*{\minitab[c]{$m_{4,4}$ \\ 1}} &&\multirow{2}{*}{294}  \rule[-2mm]{0mm}{1mm}\\
&& &&&&&&&&&  \rule[-2mm]{0mm}{1mm}\\
\multirow{2}{*}{17} & \multirow{2}{*}{$[4^{5}, 1^{12}]$}&\multirow{2}{*}{3}&\multirow{2}*{\minitab[c]{$n_1$ \\ 12}} & \multirow{2}*{\minitab[c]{$n_4$\\ 5}} & &&                             \multirow{2}*{\minitab[c]{$m_{1,4}$ \\12}} &\multirow{2}*{\minitab[c]{$m_{4,4}$ \\ 4}} &  &&\multirow{2}{*}{332}  \rule[-2mm]{0mm}{1mm}\\
&& &&&&&&&&&  \rule[-2mm]{0mm}{1mm}\\
\multirow{2}{*}{18} &\multirow{2}{*}{$[4^{5}, 2^{1}, 1^{12}]$} &\multirow{2}{*}{14} & \multirow{2}*{\minitab[c]{$n_1$ \\ 12}}  & \multirow{2}*{\minitab[c]{$n_2$\\ 1}}  &\multirow{2}*{\minitab[c]{$n_4$\\ 5}}  &
                                &\multirow{2}*{\minitab[c]{$m_{1,4}$ \\ 12}} &\multirow{2}*{\minitab[c]{$m_{2,4}$ \\ 2}}&\multirow{2}*{\minitab[c]{$m_{4,4}$ \\ 3}}& &\multirow{2}{*}{340}  \rule[-2mm]{0mm}{1mm}\\
&& &&&&&&&&&   \rule[-2mm]{0mm}{1mm}\\
\multirow{2}{*}{19} &\multirow{2}{*}{$[4^{5}, 3^{1}, 1^{13}]$}&\multirow{2}{*}{17} & \multirow{2}*{\minitab[c]{$n_1$ \\ 13}}  & \multirow{2}*{\minitab[c]{$n_3$\\ 1}}  &\multirow{2}*{\minitab[c]{$n_4$\\ 5}}  &  &
                                 \multirow{2}*{\minitab[c]{$m_{1,4}$ \\13}} &\multirow{2}*{\minitab[c]{$m_{3,4}$ \\ 3}} &\multirow{2}*{\minitab[c]{$m_{4,4}$ \\ 2}}& &\multirow{2}{*}{360}  \rule[-2mm]{0mm}{1mm}\\
&&&&&&&&&&&   \rule[-2mm]{0mm}{1mm}\\
\multirow{2}{*}{20} & \multirow{2}{*}{$[4^{6}, 1^{14}]$}&\multirow{2}{*}{5}&\multirow{2}*{\minitab[c]{$n_1$ \\ 14}} & \multirow{2}*{\minitab[c]{$n_4$\\ 6}} &  &&                             \multirow{2}*{\minitab[c]{$m_{1,4}$ \\14}} &\multirow{2}*{\minitab[c]{$m_{4,4}$ \\ 5}} &  &&\multirow{2}{*}{398}  \rule[-2mm]{0mm}{1mm}\\
&& &&&&&&&&&  \rule[-2mm]{0mm}{1mm}\\
  \hline
\caption[Extremal molecular trees]{Extremal molecular trees of order up to  $20$.
First column: the order of a tree.
Second column: the degree sequences that maximize the $F$-index and the number of corresponding trees, obtained by Sage.
Third column:  solutions of an integer linear programming problem with one realization of each degree sequence, obtained by Matlab.
Fourt column: the value of the $F$-index.
}
\label{tableChemD4}
\end{longtable}
\end{center}
%
\begin{figure}[h!pt]
\begin{center}
\vskip-1cm
\includegraphics[scale=1.0]{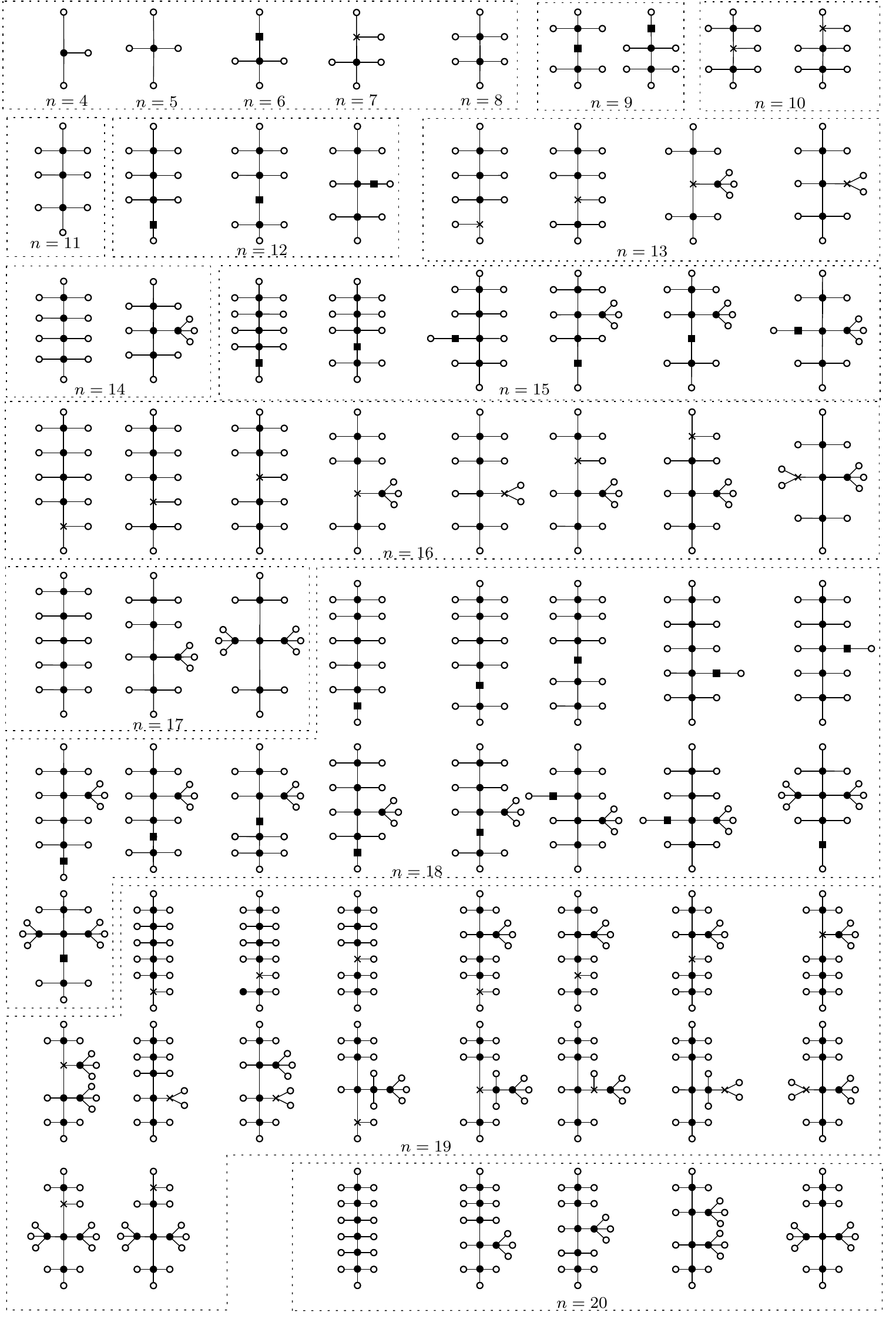}
\caption{Extremal molecular trees of order $n = 4, \cdots, 20$ with respect to the $F$-index.}
\label{pr_sl}
\vskip-0.5cm
\end{center}
\end{figure}
%
\newpage
\begin{center}
\setlength{\extrarowheight}{0.0cm}
\begin{longtable}{|c|p{2.5cm}p{0.5cm}|p{0.6cm}p{0.6cm}p{0.6cm}p{0.6cm}p{0.9cm}p{0.9cm}p{0.9cm}p{0.9cm}|c|}
\hline
\multicolumn{1}{|c|}{\multirow{2}{*}{$n$}}  &  \multicolumn{2}{c|}{\textbf{Sage}} &  \multicolumn{8}{c|}{\textbf{LP solution - Matlab}} & \multicolumn{1}{c|}{\multirow{2}{*}{\textbf{$F$}}} \\
\cline{2-11}\rule[2mm]{0mm}{1.5mm}
  & \multicolumn{1}{c}{\textbf{$D(T)$}} & \multicolumn{1}{c|}{\textbf{$\#T$}} & \multicolumn{8}{c|}{\textbf{Non-zero variables}} &  \\ \hline
\endhead
\multirow{2}{*}{4} & \multirow{2}{*}{$[3^{1}, 1^{3}]$}&\multirow{2}{*}{1}&\multirow{2}*{\minitab[c]{$n_1$ \\ 3}} & \multirow{2}*{\minitab[c]{$n_3$\\ 1}} & &&
                                \multirow{2}*{\minitab[c]{$m_{1,3}$ \\3}} &&&&\multirow{2}{*}{30}  \rule[-2mm]{0mm}{1mm}\\
&& &&&&&&&&&  \rule[-2mm]{0mm}{1mm}\\
\multirow{2}{*}{5} & \multirow{2}{*}{$[4^{1}, 1^{4}]$}&\multirow{2}{*}{1}&\multirow{2}*{\minitab[c]{$n_1$ \\ 4}} & \multirow{2}*{\minitab[c]{$n_4$\\ 1}} & &&
                                \multirow{2}*{\minitab[c]{$m_{1,4}$ \\4}} &&&&\multirow{2}{*}{68}  \rule[-2mm]{0mm}{1mm}\\
&& &&&&&&&&&  \rule[-2mm]{0mm}{1mm}\\
\multirow{2}{*}{6} & \multirow{2}{*}{$[5^{1}, 1^{5}]$}&\multirow{2}{*}{1}&\multirow{2}*{\minitab[c]{$n_1$ \\ 5}} & \multirow{2}*{\minitab[c]{$n_5$\\ 1}} & &&                             \multirow{2}*{\minitab[c]{$m_{1,5}$ \\5}} && &&\multirow{2}{*}{130}  \rule[-2mm]{0mm}{1mm}\\
&& &&&&&&&&&  \rule[-2mm]{0mm}{1mm}\\
\multirow{2}{*}{7} & \multirow{2}{*}{$[5^{1}, 2^{1}, 1^{5}]$}&\multirow{2}{*}{1}&\multirow{2}*{\minitab[c]{$n_1$ \\ 5}} & \multirow{2}*{\minitab[c]{$n_2$\\ 1}} & \multirow{2}*{\minitab[c]{$n_5$\\ 1}} &&                             \multirow{2}*{\minitab[c]{$m_{1,2}$ \\1}} &\multirow{2}*{\minitab[c]{$m_{1,5}$ \\ 4}} &\multirow{2}*{\minitab[c]{$m_{2,5}$ \\ 1}}   &&\multirow{2}{*}{138}  \rule[-2mm]{0mm}{1mm}\\
&& &&&&&&&&&  \rule[-2mm]{0mm}{1mm}\\
\multirow{2}{*}{8} & \multirow{2}{*}{$[5^{1}, 3^{1}, 1^{6}]$}&\multirow{2}{*}{1}&\multirow{2}*{\minitab[c]{$n_1$ \\ 6}} & \multirow{2}*{\minitab[c]{$n_3$\\ 1}} &  \multirow{2}*{\minitab[c]{$n_5$\\ 1}}&&
                                \multirow{2}*{\minitab[c]{$m_{1,3}$ \\2}} &   \multirow{2}*{\minitab[c]{$m_{1,5}$ \\4}} &\multirow{2}*{\minitab[c]{$m_{3,5}$ \\1}}&& \multirow{2}{*}{158}  \rule[-2mm]{0mm}{1mm}\\
&& &&&&&&&&&  \rule[-2mm]{0mm}{1mm}\\
\multirow{2}{*}{9} & \multirow{2}{*}{$[5^{1}, 4^{1}, 1^{7}]$}&\multirow{2}{*}{1}&\multirow{2}*{\minitab[c]{$n_1$ \\ 7}} & \multirow{2}*{\minitab[c]{$n_4$\\ 1}} & \multirow{2}*{\minitab[c]{$n_5$\\ 1}} && \multirow{2}*{\minitab[c]{$m_{1,4}$ \\3}} &\multirow{2}*{\minitab[c]{$m_{1,5}$ \\ 4}} &\multirow{2}*{\minitab[c]{$m_{4,5}$ \\ 1}} && \multirow{2}{*}{196}  \rule[-2mm]{0mm}{1mm}\\
&& &&&&&&&&&  \rule[-2mm]{0mm}{1mm}\\
\multirow{2}{*}{10} &\multirow{2}{*}{$[5^{2}, 1^{8}]$}  &\multirow{2}{*}{1} & \multirow{2}*{\minitab[c]{$n_1$ \\ 8}}  & \multirow{2}*{\minitab[c]{$n_5$\\ 2}}  &
                             & & \multirow{2}*{\minitab[c]{$m_{1,5}$ \\8}} &\multirow{2}*{\minitab[c]{$m_{5,5}$ \\ 1}} & & &\multirow{2}{*}{258}  \rule[-2mm]{0mm}{1mm}\\
&& &&&&&&&&&   \rule[-2mm]{0mm}{1mm}\\
%
\multirow{2}{*}{11} &\multirow{2}{*}{$[5^{2}, 2^{1}, 1^{8}]$} &\multirow{2}{*}{1} & \multirow{2}*{\minitab[c]{$n_1$ \\ 8}}  & \multirow{2}*{\minitab[c]{$n_2$\\ 1}}  & \multirow{2}*{\minitab[c]{$n_5$ \\ 2}}  & &
                                 \multirow{2}*{\minitab[c]{$m_{1,5}$ \\8}} &\multirow{2}*{\minitab[c]{$m_{2,5}$ \\ 2}} && &\multirow{2}{*}{266}  \rule[-2mm]{0mm}{1mm}\\
&& &&&&&&&&&   \rule[-2mm]{0mm}{1mm}\\
%
\multirow{2}{*}{12} & \multirow{2}{*}{$[5^{2}, 3^{1}, 1^{9}]$}&\multirow{2}{*}{2}&\multirow{2}*{\minitab[c]{$n_1$ \\ 9}} & \multirow{2}*{\minitab[c]{$n_3$\\ 1}} & \multirow{2}*{\minitab[c]{$n_5$\\ 2}} &                &\multirow{2}*{\minitab[c]{$m_{1,3}$ \\ 2}} &\multirow{2}*{\minitab[c]{$m_{1,5}$ \\ 7}}   &\multirow{2}*{\minitab[c]{$m_{3,5}$ \\ 1}}&\multirow{2}*{\minitab[c]{$m_{5,5}$ \\ 1}}&\multirow{2}{*}{286}  \rule[-2mm]{0mm}{1mm}\\
&& &&&&&&&&&  \rule[-2mm]{0mm}{1mm}\\
%
\multirow{2}{*}{13} & \multirow{2}{*}{$[5^{2}, 4^{1}, 1^{10}]$}&\multirow{2}{*}{2}&\multirow{2}*{\minitab[c]{$n_1$ \\ 10}} & \multirow{2}*{\minitab[c]{$n_4$\\ 1}} & \multirow{2}*{\minitab[c]{$n_5$\\ 2}} &&                             \multirow{2}*{\minitab[c]{$m_{1,4}$ \\2}} &\multirow{2}*{\minitab[c]{$m_{1,5}$ \\ 8}} &  \multirow{2}*{\minitab[c]{$m_{4,5}$ \\2}}  &&\multirow{2}{*}{324}  \rule[-2mm]{0mm}{1mm}\\
&& &&&&&&&&&  \rule[-2mm]{0mm}{1mm}\\
%
\multirow{2}{*}{14} &\multirow{2}{*}{$[5^{3}, 1^{11}]$}&\multirow{2}{*}{1} & \multirow{2}*{\minitab[c]{$n_1$ \\ 11}}  & \multirow{2}*{\minitab[c]{$n_5$\\ 3}}  &  &  &
                                 \multirow{2}*{\minitab[c]{$m_{1,5}$ \\11}} &\multirow{2}*{\minitab[c]{$m_{5,5}$ \\ 2}} && &\multirow{2}{*}{326}  \rule[-2mm]{0mm}{1mm}\\
&& &&&&&&&&&   \rule[-2mm]{0mm}{1mm}\\
%
\multirow{2}{*}{15} &\multirow{2}{*}{$[5^{3}, 2^{1}, 1^{11}]$}&\multirow{2}{*}{3} & \multirow{2}*{\minitab[c]{$n_1$ \\ 11}}  & \multirow{2}*{\minitab[c]{$n_2$\\ 1}}  &\multirow{2}*{\minitab[c]{$n_5$\\ 3}} &  &\multirow{2}*{\minitab[c]{$m_{1,5}$ \\ 11}} &\multirow{2}*{\minitab[c]{$m_{2,5}$ \\ 2}}&\multirow{2}*{\minitab[c]{$m_{5,5}$ \\ 1}}  & &\multirow{2}{*}{394}  \rule[-2mm]{0mm}{1mm}\\
&& &&&&&&&&&   \rule[-2mm]{0mm}{1mm}\\
\multirow{2}{*}{16} & \multirow{2}{*}{$[5^{3}, 3^{1}, 1^{12}]$} &\multirow{2}{*}{4}&\multirow{2}*{\minitab[c]{$n_1$ \\ 12}} & \multirow{2}*{\minitab[c]{$n_3$\\ 1}} & \multirow{2}*{\minitab[c]{$n_5$\\ 3}} &&                             \multirow{2}*{\minitab[c]{$m_{1,3}$ \\2}} &\multirow{2}*{\minitab[c]{$m_{1,5}$ \\10}} & \multirow{2}*{\minitab[c]{$m_{3,5}$ \\ 1}} &\multirow{2}*{\minitab[c]{$m_{5,5}$ \\ 2}} &\multirow{2}{*}{414}  \rule[-2mm]{0mm}{1mm}\\
&& &&&&&&&&&  \rule[-2mm]{0mm}{1mm}\\
\multirow{2}{*}{17} & \multirow{2}{*}{$[5^{3}, 4^{1}, 1^{13}]$}&\multirow{2}{*}{4}&\multirow{2}*{\minitab[c]{$n_1$ \\ 13}} & \multirow{2}*{\minitab[c]{$n_4$\\ 1}} &\multirow{2}*{\minitab[c]{$n_5$\\ 3}}  &&                             \multirow{2}*{\minitab[c]{$m_{1,4}$ \\1}} &\multirow{2}*{\minitab[c]{$m_{1,5}$ \\ 12}} &\multirow{2}*{\minitab[c]{$m_{4,5}$ \\3}}  &&\multirow{2}{*}{452}  \rule[-2mm]{0mm}{1mm}\\
&& &&&&&&&&&  \rule[-2mm]{0mm}{1mm}\\
\multirow{2}{*}{18} &\multirow{2}{*}{$[5^{4}, 1^{14}]$} &\multirow{2}{*}{2} & \multirow{2}*{\minitab[c]{$n_1$ \\ 14}}  & \multirow{2}*{\minitab[c]{$n_5$\\ 4}}  & &
                                &\multirow{2}*{\minitab[c]{$m_{1,5}$ \\ 14}} &\multirow{2}*{\minitab[c]{$m_{5,5}$ \\ 3}}&& &\multirow{2}{*}{514}  \rule[-2mm]{0mm}{1mm}\\
&& &&&&&&&&&   \rule[-2mm]{0mm}{1mm}\\
\multirow{2}{*}{19} &\multirow{2}{*}{$[5^{4}, 2^{1}, 1^{14}]$}&\multirow{2}{*}{7} & \multirow{2}*{\minitab[c]{$n_1$ \\ 14}}  & \multirow{2}*{\minitab[c]{$n_2$\\ 1}}  &\multirow{2}*{\minitab[c]{$n_5$\\ 4}}  &  &
                                 \multirow{2}*{\minitab[c]{$m_{1,2}$ \\1}} &\multirow{2}*{\minitab[c]{$m_{1,5}$ \\ 13}} &\multirow{2}*{\minitab[c]{$m_{2,5}$ \\ 1}}& \multirow{2}*{\minitab[c]{$m_{5,5}$ \\ 3}} &\multirow{2}{*}{522}  \rule[-2mm]{0mm}{1mm}\\
&&&&&&&&&&&   \rule[-2mm]{0mm}{1mm}\\
\multirow{2}{*}{20} & \multirow{2}{*}{$[5^{4}, 3^{1}, 1^{15}]$}&\multirow{2}{*}{8}&\multirow{2}*{\minitab[c]{$n_1$ \\ 15}} & \multirow{2}*{\minitab[c]{$n_3$\\ 1}} & \multirow{2}*{\minitab[c]{$n_5$\\ 4}}  &&                             \multirow{2}*{\minitab[c]{$m_{1,3}$ \\2}} &\multirow{2}*{\minitab[c]{$m_{1,5}$ \\ 13}} & \multirow{2}*{\minitab[c]{$m_{3,5}$ \\ 1}} &\multirow{2}*{\minitab[c]{$m_{5,5}$ \\ 3}} &\multirow{2}{*}{542}  \rule[-2mm]{0mm}{1mm}\\
&& &&&&&&&&&  \rule[-2mm]{0mm}{1mm}\\
\hline
\caption[Extremal trees]{
Extremal trees of order up to  $20$ and maximal degree $\Delta=5$.
First column: the order of a tree.
Second column: the degree sequences that maximize the $F$-index and the number of corresponding trees, obtained by Sage.
Third column:  solutions of an integer linear programming problem with one realization of each degree sequence,  obtained by Matlab.
Fourt column: the value of the $F$-index.
}
\label{tableTreesD5}
\end{longtable}
\end{center}
%
\begin{figure}[ht!b]
\begin{center}
\includegraphics[scale=1.0]{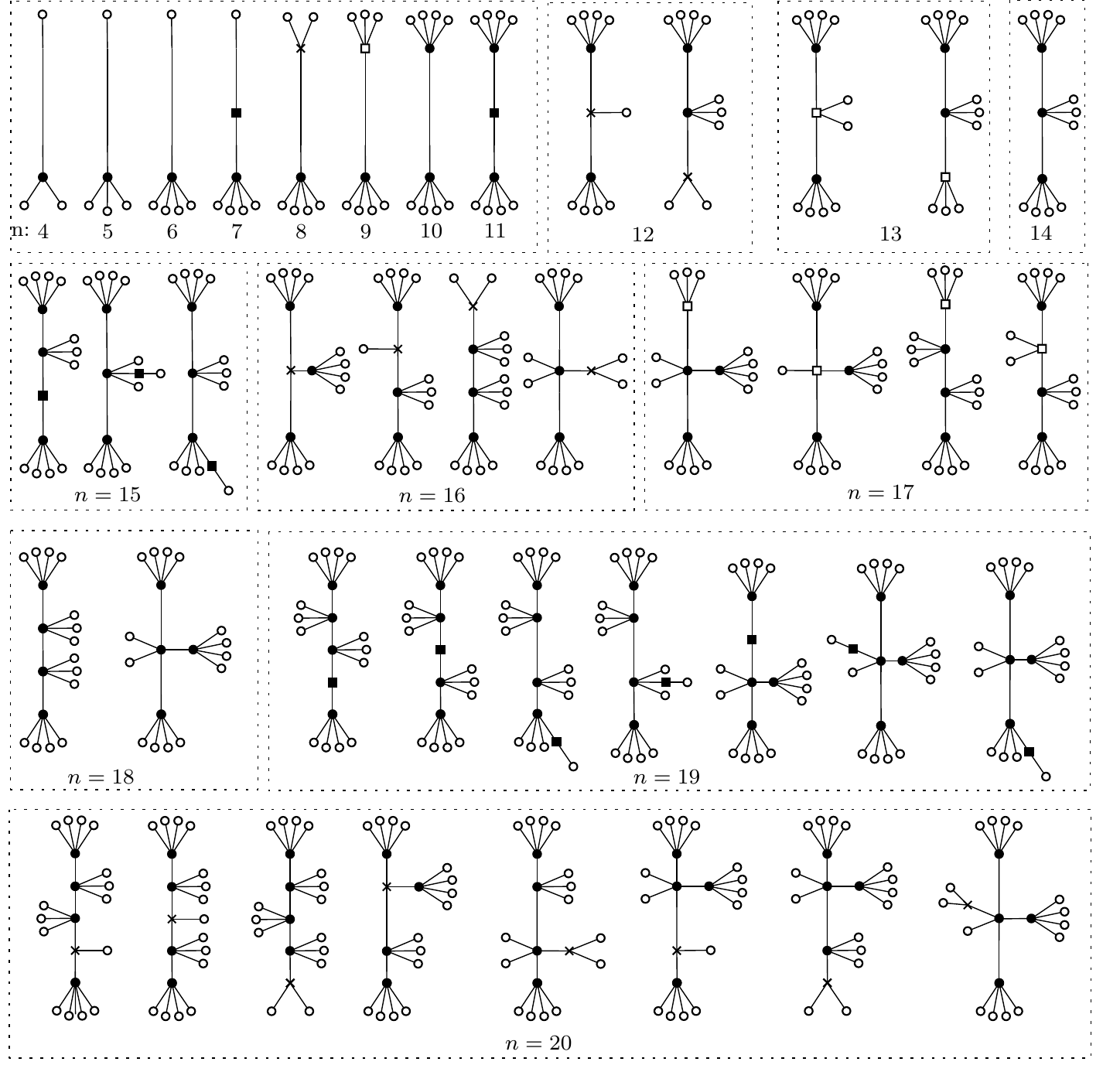}
\caption{Extremal trees of order $n = 4,5, \cdots, 20$ and  $\Delta = 5$ with respect to the $F$-index.}
\label{TreesDelta5}
\vskip-0.5cm
\end{center}
\end{figure}

\end{document}